% !BIB program = bibtex
% !TEX encoding = UTF-8 Unicode
%\documentclass[journal,10pt,final]{IEEEtran}
%\documentclass[12pt,draftcls,onecolumn]{IEEEtran}
%\documentclass[12pt, draftclsnofoot, onecolumn]{IEEEtran}
%\documentclass[journal,10pt,final]{IEEEtran}
\documentclass[11pt]{article}
\usepackage{tikz}
\usetikzlibrary{calc}
 \usepackage{bbm}
 \usepackage[sort]{cite}
\usepackage[sort, numbers]{natbib}
\pagestyle{plain}                                                      %%
%%%%%%%%%% EXACT 1in MARGINS %%%%%%%                                   %%
\setlength{\textwidth}{6.5in}     %%                                   %%
\setlength{\oddsidemargin}{0in}   %%   
\setlength{\evensidemargin}{0in}  %%        
\setlength{\textheight}{8.5in}    %%       
\setlength{\topmargin}{-0.2in}    %%   
\setlength{\headheight}{0in}      %%    
\setlength{\headsep}{0in}         %%                   
\setlength{\footskip}{.5in}       %%          
\usepackage{hyperref}
\usepackage{graphicx}
\usepackage{algorithm}
\usepackage[noend]{algorithmic}
\usepackage{epsfig}
\usepackage{epstopdf}
\usepackage{amssymb,amsmath}
\usepackage{amsmath}
\usepackage{amsthm}
\usepackage{amsfonts}
\usepackage{subfigure}
\usepackage{natbib}
\usepackage{psfrag}
\usepackage{rotating}
\usepackage{latexsym}
\usepackage{dsfont}
\usepackage{stmaryrd}
\usepackage{amsthm}
\usepackage{amssymb}
\usepackage{amsmath}
\usepackage[force,almostfull]{textcomp}
\usepackage{fancyvrb}
\usepackage{textcomp}
\usepackage{url}
\usepackage{ifdraft}
\usepackage{multirow}
\usepackage{rotating}
\usepackage{xspace}
\usepackage{array}
\usepackage{flushend}
%\usepackage{appendix}
%% We should not load both enumitem and enumerate
%% see https://tex.stackexchange.com/a/233648
%\usepackage{enumitem}
\usepackage{enumerate}
\usepackage{authblk}
\usepackage[utf8]{inputenc}
\newlength\figureheight
\newlength\figurewidth
\usepackage{graphics}
\usepackage{pgfplots}
\usepackage{siunitx}
\pgfplotsset{compat=newest}
\pgfplotsset{plot coordinates/math parser=false}
\usepackage{scalefnt}

\newtheoremstyle{specialcasestyle}{1mm}{1mm}{\upshape}{}{\bfseries\upshape}{.}{0mm}{}
\theoremstyle{specialcasestyle}

\newtheorem{prop}{Proposition}

%\includeonly{levy_processes,tables}
 % for vector
 % for du, dx in an integral
 % for caligraphic symbols

\begin{document}

\title{State-dependent Importance Sampling for Estimating Expectations of Functionals of Sums of Independent Random Variables}

\author[1]{Eya Ben Amar}
\author[2]{Nadhir Ben Rached}
\author[3]{Abdul-Lateef Haji-Ali}
\author[1,4]{Ra\'ul Tempone}

\affil[1]{Computer, Electrical and Mathematical Sciences \& Engineering Division (CEMSE), King Abdullah University of Science and Technology (KAUST), Thuwal, Saudi Arabia.}
\affil[2]{Chair of Mathematics for Uncertainty Quantification, Department of Mathematics, RWTH Aachen University, Aachen, Germany.}
\affil[3]{School of Mathematical and Computer Sciences, Heriot-Watt University, Edinburgh, UK.} 
\affil[4]{Alexander von Humboldt Professor in Mathematics for Uncertainty Quantification, RWTH Aachen University, Aachen, Germany.} 

\date{}
\maketitle
\thispagestyle{empty}
\begin{abstract}
Estimating the expectations of functionals applied to sums of random variables (RVs) is a well-known problem encountered in many challenging applications. Generally, closed-form expressions of these quantities are out of reach. A naive Monte Carlo simulation is an alternative approach. However, this method requires numerous samples for rare event problems. Therefore, it is paramount to use variance reduction techniques to develop fast and efficient estimation methods. In this work, we use importance sampling (IS), known for its efficiency in requiring fewer computations to achieve the same accuracy requirements. We propose a state-dependent IS scheme based on a  stochastic optimal control formulation, where the control is dependent on state and time. We aim to calculate rare event quantities that could be written as an expectation of a functional of the sums of independent RVs. The proposed algorithm is generic and can be applied without restrictions on the univariate distributions of RVs or the functional applied to the sum. We apply this approach to the log-normal distribution to compute the left tail and cumulative distribution of the ratio of independent RVs. For each case, we numerically demonstrate that the proposed state-dependent IS algorithm compares favorably to most well-known estimators dealing with similar problems.

%Estimating expectations of functionals applied to sums of random variables (RVs) is a well-known problem and encountered in many challenging applications. In general, closed form expressions of these quantities are out of reach. A naive Monte Carlo (MC) simulation is of course an alternative approach. However, this method requires a large number of samples for rare event problems. Therefore, it is of paramount importance to use variance reduction techniques to develop fast and efficient estimation methods. In this work, we use importance sampling (IS), being known for its efficiency in requiring less computations for achieving the same accuracy requirement. In this line, we propose a state-dependent IS scheme based on a  stochastic optimal control (SOC) formulation to calculate rare events quantities that could be written in a form of an expectation of some functional of sums of independent RVs. Our proposed algorithm is generic and can be applicable without any restriction on the univariate distributions of the different RVs or on the functional that is applied to the sum. We apply our approach to the Log-Normal distribution to compute the left tail and the cumulative distribution (CDF) of ratio of independent RVs. For each case, we show numerically that the proposed state-dependent IS algorithm compares favorably to most of the well-known estimators dealing with similar problems.

\textbf{Keywords:} Monte Carlo, rare event, importance sampling, stochastic optimal control

\textbf{2010 Mathematics Subject Classification} 65C05. 93E20. 91G60.
\end{abstract}

\section{Introduction}
\subsection{ Motivation}
In a probabilistic model, rare events are important events that infrequently happen with very
small probabilities. Estimating these probabilities has become a substantial area of research because of its many applications, such as queuing systems, insurance risk, financial
engineering, and wireless communication \cite{asmussen2007stochastic,juneja2006rare,glasserman2004monte,rached2015fast}. Typical examples occur in the context of communication systems, where the rare event could be an event in which the system fails to operate properly. For illustration, one can encounter the problem of estimating failure probabilities of the order of $10^{-9}$ for sophisticated networks, such as ultra-reliable fifth or sixth generation (5G or 6G) systems \cite{rached2020accurate}.

Calculating rare event quantities that could be written as an expectation of a functional of the sums of independent RVs is of paramount practical interest in many challenging applications. For instance, in financial engineering, calculating the value-at-risk (VaR) requires computing the left tails of the sums of RVs (i.e., the probability that the sum is less than a sufficiently small threshold).
%For instance, rare event probabilities corresponding to the left  (respectively right) tails of sums of RVs, i.e the probability that the sum is less (respectively larger) than a sufficiently small (respectively large) threshold, are encountered in many challenging applications. For example, in financial engineering, calculating the value-at-risk (VaR) requires computing the cumulative distribution function (CDF) of sums of risks.
Another relevant example is calculating the probability that the signal-to-interference-plus-noise ratio is less than a given threshold in communication systems. Under some particular fading environments, this probability can be expressed as a cumulative distribution function (CDF) of the ratio of independent RVs.

 \subsection{ Literature Review}
Various researchers have proposed closed-form approximations of the left and right tails
of sums of RVs \cite{lopez2009simple,xiao2019outage,chatterjee2018downlink,beaulieu2020marcum,zhu2020right,constantinescu2016closed,singh2018exact}. However, these approximation methods are not generic. Moreover, their accuracy is not always guaranteed for all scenarios, as it can degrade for certain system parameters. The Monte Carlo (MC) method can be used as a generic tool to cope with these problems. However, it is well acknowledged that estimating rare event quantities with the naive MC sampler requires a prohibitively large number of simulation runs \cite{kroese2013handbook}. Variance reduction techniques have been used extensively to improve the computational work of the naive MC method. In this context, importance sampling (IS) is among the most popular variance reduction techniques that provide accurate estimates of rare event probabilities with a reduced number of simulation runs when appropriately used \cite{kroese2013handbook}.

Variance reduction techniques have been widely discussed in the literature, and particular focus has been devoted to proposing algorithms for the efficient simulation of the right tail of sums of RVs (i.e., the probability that the sum exceeds a sufficiently large threshold). In particular, for distributions with light right tails (i.e., decaying at an exponential rate or faster), under some regularity assumptions, the popular exponential twisting IS approach \cite{asmussen2007stochastic} satisfies the logarithmic efficiency property, which is a useful metric used to assess the efficiency of an estimator. In contrast, for heavy-tailed distributions, such as the case of log normals and Weibulls with shape parameters strictly less than 1, the exponential twisting method is inapplicable. Therefore, efficient algorithms have been developed for estimating tail probabilities involving heavy-tailed RVs. In this context, \cite{asmussen1997simulation} provided the first logarithmically efficient estimator for such probabilities using the conditional MC idea. Other authors \cite{asmussen2006improved} have proposed an estimator with a bounded relative error under distributions with regularly varying tails, which was further extended to more general scenarios (e.g., see \cite{hartinger2009efficiency,chan2011rare,asmussen2012error,asmussen2015error,ghamami2012improving}). In addition to estimators based on the conditional MC, various state-independent IS techniques have been proposed \cite{juneja2002simulating,juneja2007estimating,am2012state,murthy2015state}. 
%For instance  \cite{juneja2007estimating} combined IS and conditioning based approach to develop an estimator for the right tail of a heavy-tailed sum.
%Moreover, it was shown in \cite{am2012state} that by suitably adjusting
%along the right tail, exponential twisting
%based state-independent IS provides a strongly efficient estimator for regularly varying right
%tail. Additionally,
%\cite{murthy2015state} introduced a simple and more efficient state-independent change of measure to estimate rare event probabilities for regularly varying heavy-tailed increments with asymptotically vanishing relative error. \\

State-independent changes of measure for estimating certain rare events involving sums of heavy-tailed RVs cannot achieve logarithmic efficiency \cite{bassamboo2007inefficiency}. 
%The authors in \cite{dupuis2004importance} gave examples where state-independent IS is not efficient. 
Therefore, more complex state-dependent IS algorithms have been proposed in the literature over the last few years to estimate probabilities for sums of heavy-tailed independent RVs. Of valuable interest are studies developed in \cite{dupuis2007importance,dupuis2007subsolutions,blanchet2006efficient,blanchet2008state,blanchet2011rare,blanchet2012state}. The researchers in \cite{blanchet2006efficient} developed an efficient state-dependent IS estimator with a bounded relative error under distributions with regularly varying heavy tails. The estimator can also be adapted to provide strongly efficient algorithms in light-tailed situations. A related approach, based on the construction of Lyapunov inequalities, has also been developed \cite{blanchet2008state} for constructing strongly efficient estimators for large deviation probabilities of regularly varying random walks. These algorithms use a parametric family of change of measure based on mixtures that are appropriately selected using Lyapunov bounds. 
%The state-dependent IS estimator in \cite{blanchet2011rare} is also based on Lyapunov inequalities to control its second moment. The IS estimator was proven to be strongly efficient assuming sub-exponential increments.
Moreover, stochastic control and game theory have been used to build efficient state-dependent IS schemes to simulate rare events \cite{dupuis2004importance,dupuis2005dynamic,dupuis2007importance}. For instance, in the heavy-tailed setting, the authors in \cite{dupuis2007importance} constructed dynamic IS estimators with a nearly asymptotically optimal relative error for independent and identically distributed (i.i.d.) nonnegative regularly varying RVs. They considered a parametric family of change of measure whose parameters are determined by solving a deterministic, discrete-time control problem. The closest work to the proposed approach is in \cite{dupuis2004importance}, where the authors proposed an approach based on connecting IS with stochastic optimal control (SOC). The scope of \cite{dupuis2004importance} is limited to the i.i.d. case and distributions with finite moment generating functions. In this work, independence is the only assumption we make. The connection between IS and SOC has been investigated for other scenarios, such as diffusions \cite{hartmann2017variational}, and for stochastic reaction networks approximated by the Tau-Leap scheme \cite{hammouda2021optimal}. The dynamics in \cite{hammouda2021optimal} evolve according to discrete-time discrete-space Markov chains, whereas the dynamics in this work evolve according to discrete-time continuous-space Markov chains.

Few researchers have recently addressed the left-tail region (i.e., the probability that sums of nonnegative RVs fall below a sufficiently small threshold \cite{asmussen2016exponential,issaid2017efficient,rached2015unified, rached2020accurate,rached2020universal,large_sum}). For instance, \cite{asmussen2016exponential} considered the specific setting of the i.i.d. sum of log-normal RVs. The approach was based on the exponential twisting technique and is logarithmically efficient. 
%Further, \cite{issaid2017efficient} proposed robust IS schemes that evaluate the OP of diversity receivers over Gamma-Gamma fading channels with bounded relative error. 
The work of \cite{rached2015unified} proposed two unified hazard rate twisting (HRT)-based approaches that estimate the outage capacity values for generalized independent fading channels. The first estimator achieves logarithmic efficiency for arbitrary fading models, whereas the second achieves the bounded relative error criterion for most well-known fading variates and logarithmic efficiency for the log-normal case. Recently, \cite{rached2020accurate} proposed an IS scheme based on sample rejection applied to the case of the independent Rayleigh, correlated Rayleigh, and i.i.d. Rice fading models, showing that the estimator satisfies the bounded relative error property. 

%In the setting of discrete RVs, in particular, when we have Poisson distributions, \cite{rached2020universal} proposed a multilevel splitting algorithm to estimate the CDF of weighted sums of Poisson RVs.
\subsection{Main Contributions}
In this paper, we propose a generic state-dependent IS approach to estimate rare event probabilities that could be written as an expectation of a functional of the sums of independent RVs. We adopt a SOC formulation to determine the optimal IS parameters, minimizing the variance or, equivalently, the second moment of the estimator within a preselected class of measures. After formulating the SOC problem and describing the algorithm used to derive the optimal controls, which are optimal IS parameters, we apply the proposed algorithm to two examples: the computation of the left-tail probability in a log-normal setting, and the computation of the CDF of the ratio of independent log-normal RVs. The proposed algorithm is generic and not restricted to the log-normal environment. The algorithm can be applied to compute the quantity of interest without restrictions on the distribution of the univariate RVs in the sum or the expression of the functional applied to the sum. Numerical simulations demonstrate the superior performance of the proposed estimator in terms of the number of samples and computational work to meet a given prescribed tolerance (TOL) compared to the existing state-of-the-art estimators dealing with similar problems.

The rest of the paper is organized as follows. Section~\ref{Section 1} describes the problem setting, presents applications that fall within the scope of applicability of the proposed approach, and introduces the concept of IS. Section~\ref{Section 2} contains the main work, explaining the state-dependent IS scheme via a novel SOC formulation, followed by presenting the algorithm. Section~\ref{Section 3} applies the proposed algorithm to two applications in wireless communications. The proposed algorithm compares favorably to some well-known estimators dealing with similar problems. 

\section{Problem Setting}\label{Section 1}
This section states the objective of the method and enumerates some applications that fall within the scope of its applicability. Next, this section introduces the concept of the naive MC method. Finally, it presents the IS technique, one of the most popular variance reduction techniques.\\

\subsection{Objective}
We consider $\mathbf{X}=(X_1,X_2,\cdots, X_N)^t$ to be a random vector comprising independent positive components with probability density functions (PDFs) $f_{X_{1}}(.), f_{X_{2}}(.), \ldots, f_{X_{N}}(.)$ and a joint PDF $f(\mathbf{x})=\prod_{n=1}^{N} f_{X_{n}}\left(x_{n}\right)$.  In this work, $X_i, i=1, \cdots, N$ are one-dimensional vectors. However, this approach is still applicable to the multidimensional case. We let $S_N=\sum_{n=1}^{N} X_i$ and $g: \mathbb{R}_{+} \rightarrow \mathbb{R}$ be a given function.
%a sequence of independent non-negative RVs $ X_1, \cdots, X_N$ and a given function $g: \mathbb{R_{+}} \rightarrow \mathbb{R}$. 
We aim to develop a state-dependent IS algorithm via a connection to an SOC formulation to  estimate rare event quantities that could be written as follows:
 \begin{equation}
     \label{eqn:eq1}
\alpha=\mathbb{E} \left[g\left(S_N \right)\right].
 \end{equation}
%\subsection{Applications}
\subsection{Applications}
\subsubsection{Right and left tail}
One of the problems that can be written as (\ref{eqn:eq1}) is calculating the right-tail probability of sums of RVs (i.e., the probability that the sum is larger than a sufficiently large threshold), which arises in many areas of engineering. This probability can be expressed as 
\begin{equation}
\label{right}
    \alpha=\mathbb{P}(S_{N} \geq \gamma_{\textrm{th}})=\mathbb{E} \left[\mathbbm{1}_{(S_{N} \geq \gamma_{\textrm{th}})} \right],
\end{equation}
corresponding to (\ref{eqn:eq1}), where $g(x)=\mathbbm{1}_{(x \geq \gamma_{\textrm{th}})}$. \\
As a practical example, the right-tail probability of the sums of RVs may represent the ruin probability of an insurance company. In this setting, $S_N$ represents the total sum of claims, and $\gamma_{\textrm{th}}$ is the initial reserve. The claims $X_1,\cdots,X_N$ can be modeled by heavy-tailed distributions \cite{asmussen2000rare}. In the Cramer--Lundberg model, this probability can be expressed as (\ref{right}) \cite{asmussen2007stochastic}. 
Moreover, calculating left-tail probabilities occurs extensively in many applications. In these cases, the quantity of interest can be expressed as 
\begin{equation}
\label{left}
    \alpha=\mathbb{E} \left[\mathbbm{1}_{(S_N \leq \gamma_{\textrm{th}})} \right],
\end{equation}
which is in the form of (\ref{eqn:eq1}), where $g(x)=\mathbbm{1}_{(x \leq \gamma_{\textrm{th}})}$. \\

One of the most relevant examples is estimating the VaR, defined as the $1-\alpha$ quantile of the loss distribution, for a sufficiently small value of $\alpha$. %The VaR is a measure of risk which is used to estimate the amount that can potentially be lost on an investment within a certain time range. 
We let a portfolio be based on $N$ assets with upcoming prices $X_1\cdots, X_N$, which can be modeled using log-normal distributions \cite{asmussen2016exponential}. The VaR of $S_{N}$ at the level of $\alpha$ $(\operatorname{VaR}_{\alpha}\left(S_{N}\right))$ is defined as the value such that the probability of a loss larger than that value is equal to $1-\alpha$ \cite{alemany2013nonparametric,sun2009general}. In other words, $\operatorname{VaR}_{\alpha}\left(S_{N}\right)=F_{S_N}^{-1}(\alpha)$, where $F_{S_N}(.)$ is the CDF of $S_N$. \\

Another challenging application is the analysis of wireless communication systems. The outage probability (OP), defined as the probability that the signal-to-noise ratio (SNR) falls below a given threshold $\gamma_{\textrm{th}}$ \cite{yilmaz2012unified}, is equivalent to computing the CDF of sums of the SNRs of the received signals. Hence, it can be expressed as in (\ref{left}). \\

\subsubsection{CDF of the ratio of independent RVs} 
Another performance metric that can be expressed as in (\ref{eqn:eq1}) is the OP in the presence of co-channel interference and noise. %In fact, in generalized fading environment with interference and noise, the OP, defined as the probability that the signal to interference and noise ratio (SINR) is less than a given threshold, takes more complicated expression rather than the probability that a sum of RVs is less than a given threshold.
%Nevertheless, the evaluation of the OP in the presence of co-channel interference can be reduced to the original problem, i.e. it can be expressed in (\ref{eqn:eq1}). 
For single-input, single-output (SISO) systems, the OP is expressed as \cite{rached2017efficient}
  \begin{equation}
\label{eqn:OPsinr}
P_{out}=\mathbb{P}\left( \operatorname{SINR} \leq \gamma_{\textrm{th}} \right)=\mathbb{P}\left(\frac{X_{0}}{\sum_{n=1}^{N} X_{n}+ \eta} \leq \gamma_{\textrm{th}}\right),
\end{equation}
where $X_0$ denotes the desired user signal power, $X_1, \cdots, X_N$ represent the received powers of the $N$ interfering signals, and $\eta$ indicates the variance of the additive white Gaussian noise. We assume that $X_0,\cdots, X_N$ are independent. Through conditioning on $X_{1}, X_{2}, \cdots, X_{N}$ and using the law of total expectation, we write (\ref{eqn:OPsinr}) as
  \begin{equation}
\label{eqn:OPCDF}
 \mathbb{E}\left[F_{X_{0}}\left(\gamma_{\textrm{th}} \sum_{n=1}^{N} X_{n}+ \gamma_{\textrm{th}} \eta \right)\right],
\end{equation}
where $F_{X_{0}}(\cdot)$ is the CDF of the RV $X_{0}$, corresponding to the form in (\ref{eqn:eq1}) with $ g(x)=F_{X_{0}}(\gamma_{\textrm{th}}(x+\eta))$.

\subsection{Importance Sampling}
%Statistical quantities that could be written in a form of an expectation of some functional of sums of RVs are encountered in several wireless communication applications.
The naive MC estimator of the quantity of interest in (\ref{eqn:eq1}) is 
%can be used to estimate these quantities and the corresponding estimator is given by:
\begin{equation}
\hat{\alpha}_{m c}=\frac{1}{M} \sum_{k=1}^{M} g\left(S_N^{(k)} \right),
\end{equation}
where $M$ denotes the number of simulation runs, and
$\{S_{N}^{(k)}\}_{k=1}^{M}$ represents independent realizations of the {RV} $S_{N}=\sum_{i=1}^{N} X_{i}$. %where for
%each realization, $k=1,2, \ldots, M,$  $\left[X_{1}^{(k)}, X_{2}^{(k)}, \cdots, X_{N}^{(k)}\right]^{T}$ %is sampled
%independently according to PDFs $f_{X_{1}}(.), f_{X_{2}}(.), \ldots, f_{X_{N}}(.)$.
\\
%However, for typical wireless communication systems, the quantity $\alpha$ is relatively small and we know that the naive MC method is  substantially expensive in the rare events regime.

However, the naive MC method is computationally expensive, requiring a substantial number of simulation runs to meet a given accuracy when considering rare event probabilities. Using appropriate variance reduction techniques, such as IS, is necessary to overcome the failure of naive MC simulations and considerably reduce the computational work. The idea is to perform a change of measure under which the rare event is generated with a higher probability than under the original distribution \cite{kroese2013handbook}. The IS technique consists of writing $\alpha$ as
\begin{equation}
\alpha =\mathbb{E}_{\tilde{f}}\left[\tilde{g}\left(\mathbf{X}\right)\right],
\end {equation}
where 
\begin{equation}
\tilde{g}\left(\mathbf{x}\right)=g\left(\sum_{n=1}^{N} x_{n}\right) \frac{f\left(\mathbf{x}\right)}{\tilde{f}\left(\mathbf{x}\right)},
\end{equation}
and $\mathbb{E}_{\tilde{f}}[\cdot]$ denotes the expectation under which the vector $\mathbf{X}$ has the joint PDF $\tilde{f}(\cdot)$. The IS estimator is expressed as
\begin{equation}
\hat{\alpha}_{I S}=\frac{1}{M} \sum_{k=1}^{M} \tilde{g}\left(\mathbf{X}^{(k)} \right),
\end{equation}
where $ \{ \mathbf{X}^{(k)}  \}_{k=1}^{M}$ represents independent realizations of $\mathbf{X}$ sampled according to $\tilde{f}(\cdot)$. When $g(x)>0, x\in \mathbb{R}_+$, the optimal change of measure minimizing the variance of the IS estimator is given by
\begin{equation}
\begin{aligned}
    \label{eqn:eqocom}
f^{*}(\mathbf{x} )=\frac{f(\mathbf{x} )  g \left(\sum_{n=1}^{N} x_{i}\right)}{\alpha}, \; \mathbf{x} \in \mathbb{R}_+^N.% x_{n}>0, \; \forall n \in\{1,2, \cdots, N\}.
\end{aligned}
\end {equation}
This optimal change of measure yields zero variance; thus, it is called the zero variance change of measure. However, using such a change of measure is impractical because it assumes the knowledge of $\alpha$, which is the unknown quantity. 
%The aim is then to propose a sub-optimal change of measure that 
%is approximately close to the optimal change of measure and 
%leads to a substantial amount of variance reduction.

\section{IS via an SOC Formulation}\label{Section 2}
This section explains the SOC formulation and how to link it to IS to construct the state-dependent IS estimator. Then, it introduces the HRT family as a change of measure. Finally, this section describes the steps of the proposed state-dependent IS algorithm.

\subsection{State-dependent IS Approach}
The idea we adopt is to link the problem of finding an efficient change of measure to a SOC problem. To apply SOC to the current static problem, we embed it with the evolution of a Markov chain with the following dynamics:
\begin{equation}
\label{eqn:eqd1}
S_{n+1}=S_{n}+X_{n+1},\; \;  n=0,1, \cdots, N-1,
\end{equation}
where $S_{0}=0$. Instead of sampling $X_{n+1}$ according to $f_{X_{n+1}}(\cdot)$, we perform a change of measure such that, given $S_{n}, \; X_{n+1}$ is distributed according to $\Tilde{f}_{X_{n+1}}\left(\cdot ; \mu_{n+1}(S_n)\right)$, where $\mu_{n+1}$ is a function of $S_{n}$. With this idea, the new joint PDF can be written as 
\begin{equation}
\Tilde{f}\left(\mathbf{x}\right)=\prod_{n=1}^{N} \Tilde{f}_{X_{n}}\left(x_{n} ; \mu_{n}\left( s_{n-1}\right)\right),
\end{equation}
where $s_{n-1}=\sum_{i=1}^{n-1} x_i$. The objective is to determine the optimal controls $\mu_{n}: \mathbb{R}_+ \rightarrow A \subset \mathbb{R}, \;  n=1,2, \cdots, N$ that minimize the second moment of the IS estimator. Therefore, assuming that the second moment of the estimator is finite, we define the cost function for $\mu_{n+1}, \cdots, \mu_{N} \in \mathcal{D}^{N-n},\; n=0, \cdots, N-1$ as 
\begin{equation}
\begin{aligned}
C_{n, s}(\mu_{n+1}, \cdots, \mu_{N})  =\mathbb{E}_{\Tilde{f}}\left[\left(g\left(S_{N}\right)\right)^{2} \prod_{i=n+1}^{N}\left(\frac{f_{ X_{i}}\left(X_{i}\right)}{\Tilde{f}_{X_{i}}\left(X_{i} ; \mu_{i}(S_{i-1})\right)}\right)^{2} \mid
S_{n}=s\right],
\end{aligned}
\end{equation}
where $\mathcal{D}=\{ \mu : \mathbb{R^+} \to A \}$ represents the set of admissible Markov controls. More precisely, $$\text{for} \quad \mu \in \mathcal{D}, \quad \quad \mu: \mathbb{R}_+ \rightarrow A.$$
We also define the value function as follows:
\begin{equation}
u(n, s)=\inf _{(\mu_{n+1}, \cdots, \mu_{N})  \in \mathcal{D}}^{N-n} C_{n, s}(\mu_{n+1}, \cdots, \mu_{N}).
\end{equation}
%A close look at the literature reveals that in the case of independent RVs, which is the current
%Such a choice is generally assumed to simplify the analysis of the estimator. 
The above SOC formulation is flexible because the RVs are dependent. The same observation holds for the optimal change of measure (\ref{eqn:eqocom}). Therefore, if the family of PDFs $\Tilde{f}_{X_{n}}\left(.; \mu_{n}\right)$ is sufficiently large, we can expect the SOC formulation to deliver an estimator with a performance close to that of the optimal estimator.

Next, the question is how to solve the minimization problem and determine the optimal controls $\mu_{n}, \; n=1,2, \cdots, N$. The idea is to solve it sequentially by going backward in time. In Proposition \ref{prop}, we state the dynamic programming equation solved by the value function $u$. 
%which represents the main contribution of this part. 
\begin{prop}
\label{prop}
For all $n \in\{0,1, \cdots, N-1\}$ and $s \geq 0$, we obtain
\begin{equation}
\begin{aligned}
\label{eqn:eq0}
&u(n, s) =\inf _{\mu \in A} \mathbb{E}_{\Tilde{f}}\left[\left(\frac{f_{ X_{n+1}}\left(X_{n+1}\right)}{\Tilde{f}_{X_{n+1}}\left(X_{n+1} ; \mu\right)}\right)^{2} u\left(n+1, S_{n+1}\right) \mid S_{n}=s\right].
\end{aligned}
\end{equation}
If the minimum is attained, we have \\
\begin{equation}
\begin{aligned}
\label{eqn:eq00}
\mu_{n+1}(s)=
\arg \min _{\mu \in A} \; \mathbb{E}_{\Tilde{f}}\left[\left(\frac{f_{ X_{n+1}}\left(X_{n+1}\right)}{\Tilde{f}_{X_{n+1}}\left(X_{n+1} ; \mu\right)}\right)^{2} u\left(n+1, S_{n+1}\right) \mid S_{n}=s\right],
\end{aligned}
\end{equation}
where $u(N, x)=(g(x))^{2}$,  $S_{n+1}= s +  X_{n+1}$, and $X_{n+1}$ is distributed according to $\Tilde{f}_{X_{n+1}}\left(\cdot; \mu_{n+1}(s)\right)$.
\\ \\
\end{prop}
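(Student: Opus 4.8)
The plan is to establish the Bellman recursion by backward induction in $n$, exploiting the multiplicative structure of the likelihood ratio together with the Markov (additive) structure of the chain $\{S_n\}$. Throughout, write $W_i = f_{X_i}(X_i)/\tilde{f}_{X_i}(X_i;\mu_i(S_{i-1}))$ for the per-step weight, so that the integrand defining $C_{n,s}$ factorises as $g(S_N)^2\prod_{i=n+1}^N W_i^2$. First I would isolate the factor corresponding to $i=n+1$ and record that, conditional on $S_n=s$, the increment $X_{n+1}$ is drawn from $\tilde{f}_{X_{n+1}}(\cdot;\mu_{n+1}(s))$, so $W_{n+1}$ depends only on $X_{n+1}$ and the single value $\mu_{n+1}(s)\in A$:
\[
C_{n,s}(\mu_{n+1},\ldots,\mu_N) = \mathbb{E}_{\tilde{f}}\!\left[ W_{n+1}^2 \Big( g(S_N)^2 \prod_{i=n+2}^N W_i^2\Big) \,\Big|\, S_n = s\right].
\]

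Next I would apply the tower property, conditioning the inner expectation on $S_{n+1}=s+X_{n+1}$. Because $X_{n+2},\ldots,X_N$ are independent of $(X_1,\ldots,X_{n+1})$ and the chain is additive, the conditional law of the remaining trajectory depends on the past only through $S_{n+1}$, while $W_{n+1}$ is measurable with respect to $S_{n+1}$ given $S_n=s$. Hence the inner conditional expectation of $g(S_N)^2\prod_{i=n+2}^N W_i^2$ given $S_{n+1}$ is exactly $C_{n+1,S_{n+1}}(\mu_{n+2},\ldots,\mu_N)$, yielding the one-step recursion
\[
C_{n,s}(\mu_{n+1},\ldots,\mu_N) = \mathbb{E}_{\tilde{f}}\!\left[ W_{n+1}^2\, C_{n+1,S_{n+1}}(\mu_{n+2},\ldots,\mu_N) \,\Big|\, S_n = s\right].
\]

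I would then take the infimum over the composite control and split it using the decoupling just obtained: the conditional distribution given $S_n=s$ only sees the value $\mu_{n+1}(s)=\mu$ (the behaviour of $\mu_{n+1}$ at other states is irrelevant), and the tail controls $(\mu_{n+2},\ldots,\mu_N)$ enter solely through $C_{n+1,S_{n+1}}$. The problem therefore reduces to the identity
\[
\inf_{(\mu_{n+2},\ldots,\mu_N)} \mathbb{E}_{\tilde{f}}\!\left[ W_{n+1}^2\, C_{n+1,S_{n+1}}(\mu_{n+2},\ldots,\mu_N)\,\Big|\,S_n=s\right] = \mathbb{E}_{\tilde{f}}\!\left[ W_{n+1}^2\, u(n+1,S_{n+1})\,\Big|\,S_n=s\right].
\]
The inequality $\geq$ is immediate: the weight $W_{n+1}^2$ is nonnegative and $C_{n+1,s'}\geq u(n+1,s')$ pointwise, so for every admissible tail control the integrand dominates $W_{n+1}^2\,u(n+1,S_{n+1})$. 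Taking the infimum over $\mu\in A$ afterwards produces the right-hand side of \eqref{eqn:eq0}, while the base case $u(N,x)=g(x)^2$ is read off directly from the definition of $C_{N,x}$ (empty product).

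The main obstacle is the reverse inequality $\leq$, namely interchanging the infimum over tail controls with the nonnegative expectation against $W_{n+1}^2$. This is where I would concentrate the effort: fixing $\epsilon>0$, I would construct Markov tail feedbacks that are $\epsilon$-optimal simultaneously for (almost) every reachable state $s'=s+x$, so that $C_{n+1,s'}\le u(n+1,s')+\epsilon$ on the support of $S_{n+1}$, and then let $\epsilon\downarrow 0$. The existence of such a uniformly near-optimal Markov feedback is a measurable-selection statement; it is clean precisely when the infimum defining $u(n+1,\cdot)$ is attained (for instance when $A$ is compact and the integrand is lower semicontinuous in $\mu$), which is exactly the regime in which the stated $\arg\min$ characterisation of $\mu_{n+1}(s)$ in \eqref{eqn:eq00} is meaningful. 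Granting attainment, an optimal tail feedback realises $u(n+1,S_{n+1})$ and the displayed identity follows, closing the induction and giving both the dynamic programming equation and the optimal control.
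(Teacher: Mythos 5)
Your argument is correct and follows essentially the same route as the paper: condition on the first increment via the tower property, use the Markov/additive structure to identify the inner conditional expectation with $C_{n+1,S_{n+1}}$, and then prove the two inequalities by (i) bounding $C_{n+1,\cdot}\ge u(n+1,\cdot)$ pointwise and (ii) plugging in (near-)optimal tail controls. Your explicit attention to the measurable-selection/$\epsilon$-optimal-feedback issue in the ``$\leq$'' direction is a slightly more careful rendering of what the paper handles by assuming attainment in Step~2 and deferring the general case to a remark on minimizing sequences.
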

\begin{proof}
For simplicity, we assume that the optimal control is attained:
\begin{equation}
	\label{min}
	u(n, s)=\min _{(\mu_{n+1}, \cdots, \mu_{N})  \in \mathcal{D}^{N-n}} C_{n, s}(\mu_{n+1}, \cdots, \mu_{N}).
\end{equation}
%The proof has two steps: to use the optimal control to
%verify that
%\begin{equation}
%\scriptsize
%u(n, s) \geq\ \min _{\mu_{n+1}\in {A}} \mathbb{E}_{\Tilde{f}}\left[\left(\frac{f_{ X_{n+1}}\left(X_{n+1}\right)}{\Tilde{f}_{X_{n+1}}\left(X_{n+1} ; \mu_{n+1}\right)}\right)^{2} u\left(n+1, S_{n+1}\right) \mid S_{n}=s\right],
%\end{equation}
%and then to show that an arbitrary control yields
%\begin{equation}
%\scriptsize
%u(n, s) \leq\ \min _{\mu_{n+1}\in {A}} \mathbb{E}_{\Tilde{f}}\left[\left(\frac{f_{ X_{n+1}}\left(X_{n+1}\right)}{\Tilde{f}_{X_{n+1}}\left(X_{n+1} ; \mu_{n+1}\right)}\right)^{2} u\left(n+1, S_{n+1}\right) \mid S_{n}=s\right],
%\end{equation}
%which together conclude the proof. \\ \\
\textbf{Step 1}
We let $\mu_{n+1}^*, \cdots, \mu_{N}^*$ be the optimal control minimizing (\ref{min}). Then, we obtain 
\begin{equation}
	\begin{aligned}
		\label{eqn:eqproof}
		 u(n, s)&= \mathbb{E}_{\Tilde{f}}\left[\left(g\left(S_{N}\right)\right)^{2} \prod_{i=n+1}^{N}\left(\frac{f_{ X_{i}}\left(X_{i}\right)}{\Tilde{f}_{X_{i}}\left(X_{i} ; \mu_{i}^* (S_{i-1})\right)}\right)^{2} \mid S_{n}=s\right]
		%\\&= \mathbb{E}_{\Tilde{f}}\left[\mathbb{E}_{\Tilde{f}}\left[\left(g\left(S_{N}\right)\right)^{2} \prod_{i=n+1}^{N}\left(\frac{f_{ X_{i}}\left(X_{i}\right)}{\Tilde{f}_{X_{i}}\left(X_{i} ; \mu_{i}^*\right)}\right)^{2} \mid S_{n}=s,X_{n+1}\right] \mid S_{n}=s\right]
		\\&= \mathbb{E}_{\Tilde{f}}\left[\mathbb{E}_{\Tilde{f}}\left[\left(g\left(S_{N}\right)\right)^{2}  \prod_{i=n+2}^{N}\left(\frac{f_{ X_{i}}\left(X_{i}\right)}{\Tilde{f}_{X_{i}}\left(X_{i} ; \mu_{i}^*(S_{i-1})\right)}\right)^{2} \right. \right. \\&   \times \left. \left. \left(\frac{f_{ X_{n+1}}\left(X_{n+1}\right)}{\Tilde{f}_{X_{n+1}}\left(X_{n+1} ; \mu_{n+1}^*(S_{n})\right)}\right)^{2} \mid S_{n}=s,X_{n+1}\right] \mid S_{n}=s\right].
	\end{aligned}
\end{equation}
Knowing $X_{n+1}$ and $S_n$, $\left(\frac{f_{ X_{n+1}}\left(X_{n+1}\right)}{\Tilde{f}_{X_{n+1}}\left(X_{n+1} ; \mu_{n+1}^*(S_{n})\right)}\right)^{2}$ is deterministic. Thus, using the Markov property of $S_n$, we obtain
%\begin{equation}
%\tiny
%\begin{aligned}
%&\mathbb{E}_{\Tilde{f}}\left[\left(g\left(S_{N}\right)\right)^{2} \prod_{i=n+2}^{N}\left(\frac{f_{ X_{i}}\left(X_{i}\right)}{\Tilde{f}_{X_{i}}\left(X_{i} ; \mu_{i}^*\right)}\right)^{2} \left(\frac{f_{ X_{n+1}}\left(X_{n+1}\right)}{\Tilde{f}_{X_{n+1}}\left(X_{n+1} ; \mu_{n+1}^*\right)}\right)^{2} \mid S_{n}=s,X_{n+1}\right] \\
%&= 
%\frac{f^{2}_{ X_{n+1}}\left(X_{n+1}\right)}{\Tilde{f}^{2}_{X_{n+1}}\left(X_{n+1} ; \mu_{n+1}^*\right)} \mathbb{E}_{\Tilde{f}}\left[\left(g\left(S_{N}\right)\right)^{2} \prod_{i=n+2}^{N}\left(\frac{f_{ X_{i}}\left(X_{i}\right)}{\Tilde{f}_{X_{i}}\left(X_{i} ; \mu_{i}^*\right)}\right)^{2} \mid S_{n}=s,X_{n+1}\right].
%\end{aligned}
%\end{equation}
%\\
%By the Markovian property, we get
\begin{equation}
	\begin{aligned}
		 \mathbb{E}_{\Tilde{f}}\left[\left(g\left(S_{N}\right)\right)^{2} \prod_{i=n+2}^{N}\left(\frac{f_{ X_{i}}\left(X_{i}\right)}{\Tilde{f}_{X_{i}}\left(X_{i} ; \mu_{i}^*(S_{i-1})\right)}\right)^{2} \mid S_{n}=s,X_{n+1}\right]=&C_{n+1,S_{n+1}}(\mu_{n+2}^*, \cdots, \mu_{N}^*) \\ &\geq u(n+1,S_{n+1}).
	\end{aligned}
\end{equation}
Hence, the following inequality holds:
\begin{equation}
	\begin{aligned}
		\label{eqn:eqproof2}
		& u(n, s) \geq \mathbb{E}_{\Tilde{f}}\left[\left(\frac{f_{ X_{n+1}}\left(X_{n+1}\right)}{\Tilde{f}_{X_{n+1}}\left(X_{n+1} ; \mu_{n+1}^*(s)\right)}\right)^{2} u(n+1,S_{n+1}) \mid S_{n}=s\right]
		\\ & \geq \min _{\mu\in {A}} \mathbb{E}_{\Tilde{f}}\left[\left(\frac{f_{ X_{n+1}}\left(X_{n+1}\right)}{\Tilde{f}_{X_{n+1}}\left(X_{n+1} ; \mu\right)}\right)^{2} u\left(n+1, S_{n+1}\right) \mid S_{n}=s\right].
	\end{aligned}
\end{equation}
\textbf{Step 2}
We choose the control $\mu_{n+1}^+$ to be arbitrary and, given the value of $S_{n+1}$, we select the optimal controls $\mu_{n+2}^*, \cdots, \mu_{N}^*$. Then, the following lower bound holds:
\begin{equation}
	\begin{aligned}
		\label{eqn:eqproof3}
		& u(n, s)  \leq \mathbb{E}_{\Tilde{f}}\left[\left(\frac{f_{ X_{n+1}}\left(X_{n+1}\right)}{\Tilde{f}_{X_{n+1}}\left(X_{n+1} ; \mu_{n+1}^+(s)\right)}\right)^{2} \left(g\left(S_{N}\right)\right)^{2}  \prod_{i=n+2}^{N}\left(\frac{f_{ X_{i}}\left(X_{i}\right)}{\Tilde{f}_{X_{i}}\left(X_{i} ; \mu_{i}^*(S_{i-1})\right)}\right)^{2} \mid S_{n}=s\right]
		\\&  \leq \mathbb{E}_{\Tilde{f}}\left[\left(\frac{f_{ X_{n+1}}\left(X_{n+1}\right)}{\Tilde{f}_{X_{n+1}}\left(X_{n+1} ; \mu_{n+1}^+(s)\right)}\right)^{2}  \right.  \\ & \;  \; \; \; \; \;  \times  \mathbb{E}_{\Tilde{f}}\left[\left(g\left(S_{N}\right)\right)^{2}  \left. \prod_{i=n+2}^{N}\left(\frac{f_{ X_{i}}\left(X_{i}\right)}{\Tilde{f}_{X_{i}}\left(X_{i} ; \mu_{i}^*(S_{i-1})\right)}\right)^{2} \mid S_{n}=s,X_{n+1}\right] \mid S_{n}=s\right]
		\\& = \mathbb{E}_{\Tilde{f}}\left[\left(\frac{f_{ X_{n+1}}\left(X_{n+1}\right)}{\Tilde{f}_{X_{n+1}}\left(X_{n+1} ; \mu_{n+1}^+ (s)\right)}\right)^{2}u(n+1,S_{n+1}) \mid S_{n}=s\right].
	\end{aligned}
\end{equation}
Taking the minimum over all controls $\mu_{n+1}^+(s)$ yields
\begin{equation}
	\label{second_inq}
	\begin{aligned}
		 u(n, s) \leq \min _{\mu \in {A}} \mathbb{E}_{\Tilde{f}}\left[\left(\frac{f_{ X_{n+1}}\left(X_{n+1}\right)}{\Tilde{f}_{X_{n+1}}\left(X_{n+1} ; \mu\right)}\right)^{2} u\left(n+1, S_{n+1}\right) \mid S_{n}=s\right].
	\end{aligned}
\end{equation}
Hence, the proof is concluded using (\ref{eqn:eqproof3}) and (\ref{second_inq}).
\end{proof}
\textit{Remark:}
We can prove the proposition without the assumption that the minimum is attained. For that, we use a minimizing sequence $\mu_i$ of controls, satisfying
\begin{equation}
	u(n,s)=\lim _{i \to \infty} C_{n,s}(\mu_i).
\end{equation}

\subsection{Hazard Rate Twisting Family}
The choice the family of PDFs $\Tilde{f}_{X_{n}}\left(.; \mu_{n}\right), n=1,\cdots,N$ in this work is based on the well-known HRT. The HRT technique was originally developed to deal with the right tail of sums of heavy-tailed RVs \cite{juneja2002simulating,rached2018generalization}.\\

We define the hazard rate $\lambda_{X_i}(\cdot)$ associated with the RV $X_i$ as
 \begin{equation}
     \label{eqn:eq4}
\lambda _{X_i}(x)=\frac{f_{X_i}(x)}{1-F_{X_i}(x)}, \; \; x>0,
\end{equation}
where $F_{X_i}(x)=\mathbb{P}(X_i \leq x)$ is the CDF of $X_i$, $i=1,\cdots, N$. We also define the hazard function as
 \begin{equation}
     \label{eqn:eq5}
\Lambda_{X_i}(x)=-\log \left(1-F_{X_i}(x)\right), \; \; x>0.
\end{equation}
From (\ref{eqn:eq4}) and (\ref{eqn:eq5}), the PDF of $X_{i}$ can be expressed as 
 \begin{equation}
     \label{eqn:eq6}
f_{X_i}(x)=\lambda_{X_i}(x) \exp \left(-\Lambda_{X_i}(x)\right), \; \; x>0.
\end{equation}
The HRT change of measure is obtained by twisting the hazard rate of each component $X_i \; , \; i=1,\cdots, N$ by a quantity $\mu_i<1$ as follows:
 \begin{equation}
     \label{eqn:HRT}
\begin{aligned}
\Tilde{f}_{X_i}(x;\mu_i) & = (1-\mu_i) \lambda_{X_i}(x) \exp \left(-(1-\mu_i) \Lambda_{X_i}(x)\right) \\
&=(1-\mu_i) f_{X_i}(x) \exp \left(\mu_i \Lambda_{X_i}(x)\right), \; \; x>0.
\end{aligned}
\end{equation}
\\
 Moreover, $\mu_i$ should satisfy $0\leq\mu_i<1, \; i=1,\cdots, N$ to efficiently address the estimation of the right tail of the sum distribution. Consequently, the tail of the resulting distribution becomes much heavier to the right than the original. However, this feature is unsuitable for dealing with the left tail. Two approaches were proposed in \cite{rached2015unified} to adjust the HRT to handle the left-tail region. The first is based on twisting the RVs $-X_1,\cdots,-X_N$ instead of the original variates $X_1,\cdots, X_N$. The second approach applies the HRT approach to $X_1,\cdots, X_N$ using a negative twisting parameter. \\ 
 Considering the appropriate twisting parameter, we employ the HRT change of measure given by (\ref{eqn:HRT}), and the set $A$ in this case is given by ${A}=(-\infty,1)$. By doing so, the value function is given by
\begin{equation}
\begin{aligned}
&u(n, s) =\inf _{\mu \in A} \mathbb{E}_{\Tilde{f}}\left[\frac{\exp \left(-2\mu \Lambda_{X_{n+1}}(X_{n+1})\right)}{(1-\mu)^{2} } u\left(n+1, S_{n+1}\right) \mid S_{n}=s\right].
\end{aligned}
\end{equation}

\subsection{Algorithm}
Based on the results stated in the proposition, we propose a numerical algorithm to approximate the optimal controls $\mu_n$, where $n=1,\cdots, N$. We start by truncating the space $\mathbb{R}_{+}$ and work in the interval $[0, \Bar{S}]$, where $\Bar{S}$ is a large number in $\mathbb{R}_{+}$. There are particular cases that we treat, where $\Bar{S}$ is naturally chosen. For instance, when estimating $\mathbb{P}(S_N \leq \gamma_{\textrm{th}})$, due to the nonnegativity of $X_i$, $u(n,s)=0$ for $s \geq \gamma_{\textrm{th}}$ and $n=0, \cdots, N$. In this case, $\Bar{S}$ is set equal to $\gamma_{\textrm{th}}$. In the general case, $\Bar{S}$ is selected to be sufficiently large. At each step of the backward algorithm, we use linear extrapolation to compute the value function for $s > \Bar{S} $.

We consider a mesh in the one-dimensional $s$-space: $0=s_{0}<s_{1}, \cdots<s_{K}=\Bar{S}$. The aim is to approximately compute $u\left(n, s_{k}\right)$ for all $n=0,1, \cdots, N-1$ and $s_{k}, \, k=0,1, \cdots, K$. The algorithm is summarized as follows:\\\\
%\begin{enumerate}
 %   \item 
\textbf{Step 1:} For each $s_{k}$ in the mesh, we solve the following:
\begin{equation}
\label{min_prob}
\begin{aligned}
 u\left(N-1, s_{k}\right)  & =\min _{\mu \in A} \mathbb{E}_{\Tilde{f}}\left[\left(\frac{f_{X_{N}}\left(X_{N}\right)}{\Tilde{f}_{X_{N}}\left(X_{N} ; \mu\right)}\right)^{2}\left(g\left(s_{k}+X_{N}\right)\right)^{2}\right] \\
&=\min _{\mu\in A} \int_{0}^{+\infty} \frac{\left(f_{X_{N}}(t)\right)^{2}}{\Tilde{f}_{X_{N}}\left(t ; \mu\right)}\left(g\left(s_{k}+t\right)\right)^{2} dt,
\end{aligned}
\end{equation}
and
\begin{equation}
\label{min_prob1}
\begin{aligned}
& \mu_N(s_k) 
&=\underset{\mu\in A}{\arg\min} \int_{0}^{+\infty} \frac{\left(f_{X_{N}}(t)\right)^{2}}{\Tilde{f}_{X_{N}}\left(t ; \mu\right)}\left(g\left(s_{k}+t\right)\right)^{2} dt.
\end{aligned}
\end{equation}
This step is not expensive because we must compute a one-dimensional integral for each point in the mesh and perform an optimization problem for the parameter $\mu$. When the HRT family is used, the optimization problem becomes equivalent to determining the root of a nonlinear equation.\\\\
\textbf{Step 2:} After obtaining $u\left(N-1, s_{k}\right)$ for all $s_{k}$ in the grid, the next step again applies the result of the proposition to obtain an approximation of $u(N-2, s_{k})$ and $ \mu_{N-1}(s_k) $
\begin{equation}
\begin{aligned}
 u\left(N-2, s_{k}\right) =\min _{\mu \in A} \int_{0}^{+\infty} \frac{\left(f_{X_{N-1}}(t)\right)^{2}}{\Tilde{f}_{X_{N-1}}\left(t ; \mu \right)}u\left(N-1,s_{k}+t\right) dt.
\end{aligned}
\end{equation}
To perform this step, we must know $u(N-1, s)$ for all $s$ that are not necessarily in the grid. To overcome this problem, we proceed by interpolating between the points $u\left(N-1, s_{k}\right)$, where $k=0,1,\cdots,K$. As mentioned, linear extrapolation is employed for $s>\Bar{S}$ when needed.\\\\
\textbf{Step 3:} After computing $ \mu_{n}(s_k) $ for $n=1,2, \cdots, N$ and all $s_{k}$ in the grid $k=0,1,2, \cdots, K$, the following step is to solve for $\mu_{n}, n=1,2, \cdots, N$ by going forward in time. More specifically, we start at $S_{0}=0$ and sample from $\Tilde{f}_{ X_{1}}\left(\cdot, \mu_{1}\right)$ to obtain $S_{1}$. Further, $\mu_{1}(0)$ was already computed in the resolution of the backward problem. We compute $\mu_{2}$ as
\begin{equation}
\label{eqn:eq11}
\mu_2\left(\Tilde{s}_{1}\right)= \underset{\mu \in A}{\arg\min} \int_{0}^{\infty} \frac{\left(f_{X_{2}}(t)\right)^{2}}{\Tilde{f}_{X_{2}}\left(t ; \mu\right)} u\left(2, \Tilde{s}_{1}+t\right) dt.
\end{equation}
After computing $\mu_2$, we simulate $S_{2}$ as $S_{2}=\Tilde{s}_{1}+X_{2}$, with $X_2$ sampled from $\Tilde{f}_{X_{2}}\left(. ; \mu_{2}\right)$. We continue repeating this procedure until we reach $\mu_N $ and then sample $X_{N}$.
In the case of smooth controls, the optimization problem (\ref{eqn:eq11}) can be avoided using interpolation between controls, obtained in the backward step, on the grid, $s_1,\cdots, s_K$. \\\\
\textbf{Step 4:} The forward problem is repeated $M$ times. The proposed IS estimator is given as
\begin{equation}
\hat{\alpha}_{\textrm{IS}}=\frac{1}{M} \sum_{k=1}^{M} g\left(S_{N}^{(k)}\right) \prod_{i=1}^{N} \frac{f_{X_{i}}\left(X_{i}^{(k)}\right)}{\Tilde{f}_{X_{i}}\left(X_{i}^{(k)}, \mu_i(S_{i-1}^{(k)})\right)}.
\end{equation}
%\end{enumerate}
 \section{Numerical Results}\label{Section 3}
 This section presents selected numerical results to illustrate the performance of the proposed IS scheme. First, the methodology adopted to demonstrate the performance of the proposed approach is discussed. The motivation for using the improved version of the proposed method, called the aggregate method, is explained. Then, the proposed algorithm is applied to estimate the OP at the output of diversity receivers with and without co-channel interference in the log-normal environment.

\subsection{Methodology}
Within the broad applicability of the proposed estimator, we focused on applying it to calculate the left-tail probability and the CDF of the ratio of independent RVs. We used the proposed estimator to estimate the OP at the output of diversity receivers with and without co-channel interference. We considered the case in which the antennae are sufficiently spaced to assume that independent RVs can model fading channels. We considered the log-normal fading environment that exhibits a good fit for realistic propagation channels. We demonstrated that the proposed approach achieves a substantial reduction of the variance compared to other well-known IS algorithms. \\
In both applications, the objective was to efficiently estimate the following:
 \begin{equation}
     \label{eqn:eq1.}
\alpha=\mathbb{E} \left[g\left(\sum_{i=1}^{N} X_i \right)\right],
 \end{equation}
 %where $g(x)=\mathbbm{1}_{(x \leq a)}$ in the first application, and $ g(x)=F_{X_{0}}(\gamma_{\textrm{th}}(x+\eta))$ in the second application.
where $X_1,\cdots, X_N$ denote i.i.d. log-normal RVs with parameters $m$ and $\sigma^2$. The PDF of $X_i, \; i=1,\cdots, N,$ is expressed as follows:
 \begin{equation}
 f_{X_i}(x)=\frac{1}{x \sigma \sqrt{2 \pi}} \exp \left(-\frac{(\ln x-m)^{2}}{2 \sigma^{2}}\right) , \; \; x >0.
 \end{equation}
 For the second application, we let $X_0$ be a log-normal RV with parameters $m_0$ and $\sigma_0^2$.
 
We employed the HRT change of measure in (\ref{eqn:HRT}) to build the estimator. Hence, we call this approach the HRT-SOC IS approach, and the corresponding estimator is denoted by $T_{\textrm{HRT-SOC}}$ which is expressed as follows: 
\begin{equation}
\begin{aligned}
\label{eqn:estimator}
T_{\textrm{HRT-SOC}}= g\left(S_{N}\right)
\prod_{i=1}^N \frac{e^{ -\mu_i(S_{i-1}) \Lambda_{X_i}\left(X_{i}\right)}}{ (1-\mu_i(S_{i-1}))},
\end{aligned}
\end{equation}
where $g(x)=\mathbbm{1}_{(x \leq \gamma_{th})}$ in the first application, and $g(x)=F_{X_{0}}(\gamma_{{th}}(x+\eta))$ in the second application.
%The control in this case corresponds to the twisting parameter $\theta$ that depends on the current state and time. 
In this setting, each step of the backward algorithm can be expressed, for $k=0,\cdots,K$, as 
\begin{equation}
\begin{aligned}
\label{eqn:eqf}
  u(n,s_k)= \min_{\mu \in (-\infty,1)} \; \; \;  \frac{1}{1-\mu} \;  \int_0^{+ \infty} u(n+1,s_k+t) f_{X_{n+1}}(t) e^{-\mu \; \Lambda_{X_{n+1}}(t)} dt.
  \end{aligned}
\end{equation}
The controls $\mu_n(s_k)$ are obtained by solving the following equation:
\begin{equation}
\begin{aligned}
&1-\mu_n (s_k)=
\frac{\int_0^{+ \infty} u(n-1,s_k+t) f_{X_{n-1}}(t) \; e^{-\mu_n(s_k) \; \Lambda_{X_{n-1}}(t)} dt}{\int_0^{+ \infty} \Lambda_{X_{n-1}}(t) u(n-1,s_k+t) f_{X_{n-1}}(t) \; e^{-\mu_n(s_k) \; \Lambda_{X_{n-1}}(t)} dt}.
\end{aligned}
\end{equation}

For the forward step, assuming that the control is smooth (motivated by numerical observations), we can compute the controls by interpolating between the points $\mu_n(s_k),\; k=0,\cdots,K$. \\
To sample from the change of measures $\Tilde{f}_{ X_{i}}
(\cdot), \quad i=1.\cdots, N$, we used the inverse CDF technique. In \cite{rached2015fast}, the authors revealed that the inverse CDF of the HRT of a log-normal RV $X_i$ is given by
\begin{equation}
F_{X_i}^{-1}(y)=\exp \left(m+\sigma \Phi^{-1}\left(1-(1-y)^{-\frac{1}{\mu_i-1}}\right)\right),
\end{equation}
where $\mu_i$ is the twisting parameter corresponding to $X_i$ and $\Phi(\cdot)$ is the CDF of the standard normal distribution. This formula can be generalized to other distributions as \cite[eq.~(65)]{rached2015unified}.

%The use of the Log-Normal setting with the HRT technique will allow us to compare our estimator with the approach in \cite{rached2015fast} which used the HRT without SOC; i.e, the control is constant, independent of the state and time. We denote this method as HRT. We also compare our result to the exponential twisting IS estimator described in \cite{asmussen2016exponential}.

The relative error serves as a measure of the efficiency of the estimators. The relative error of the naive MC estimator and the proposed IS estimator are defined respectively through the central limit theorem~\cite{asmussen2007stochastic} as
 \begin{equation}
     \label{eqn:eq10}
\epsilon_{\textrm{MC}}=C \frac{\sqrt{\alpha(1-\alpha)}}{\sqrt{M} \alpha}, \; \;  \epsilon_{\textrm{HRT-SOC}}=C \frac{\sqrt{\operatorname{Var}\left[T_{\textrm{HRT-SOC}}\right]}}{\sqrt{M} \alpha},
\end{equation}
where $C$ is the confidence constant equal to $1.96$ for the $95$\%
confidence interval.
%, $M_{\textrm{MC}}$ and $M_{\textrm{IS}}$
%are the number of samples for MC estimator and the proposed IS
%estimators, respectively. 
%The relative error of the HRT and exponential twisting estimators are defined similarly to (\ref{eqn:eq10}).  

We compared the estimator defined in (\ref{eqn:estimator}) to other existing estimators when calculating the OP at the output of diversity receivers with and without co-channel interference. For instance, using the log-normal setting with the HRT technique allows us to compare the estimator with the approach in \cite{rached2015fast}, which used the HRT without SOC (i.e., the control is constant, independent of the state and time). We denote this method as HRT. In the numerical experiments, the HRT-SOC technique reduces the variance substantially compared to other approaches. However, it requires additional time, called a backward cost, to determine the optimal controls. \\  We let $ M_{\textrm{HRT}}$ and $ M_{\textrm{HRT-SOC}}$ be the number of required simulation runs for the HRT estimator $T_{\textrm{HRT}}$ and the proposed estimator $T_{\textrm{HRT-SOC}}$, respectively, to ensure a relative error equal to $\textrm{TOL}$. The total costs of the HRT-SOC and HRT approaches are expressed as follows:
 \begin{equation}
     \label{eqn:eq20}
     \text{W}_{\textrm{HRT-SOC}}= \underbrace{N \times K \times T_b}_{\text{Backward cost}} + \underbrace{M_{\textrm{HRT-SOC}} \times T_f}_{\text{Forward cost}},
\end{equation}
 \begin{equation}
     \label{eqn:eqtime}
     \text{W}_{\textrm{HRT}}= \underbrace{M_{\textrm{HRT}} \times T_f}_{\text{Forward cost}},
\end{equation}
where $T_b$ is the time required in the backward algorithm to calculate a single control, and $T_f$ represents the cost per sample in the forward step (approximately the same for both approaches). Figures~\ref{fig10} and \ref{fig12} illustrate that the variance reduction compared to the HRT technique increases as the quantity of interest becomes rarer. Thus, we determine that $M_{\textrm{HRT}} \gg M_{\textrm{HRT-SOC}}$, especially for rare regions. Consequently, we expect that, for the regime of rare events and a fixed $N$, the backward time can be neglected compared to the forward cost of the HRT, which is presented in Figure~\ref{fig11}. \\
When the backward time dominates the forward time of the HRT, we propose an improved version of the HRT-SOC estimator. We call this version the aggregate method (HRT-SOC-AG), which aims to reduce the backward cost without considerably affecting the variance reduction. 
%In a first experiment, we perform the comparison between the different estimators in terms of necessary number of simulation runs in order to meet a fixed relative error requirement $\textrm{TOL}=5 \%$.

\subsection{Aggregate Method}
The idea for the aggregate method is to divide the sum $S_N$ into $B$ blocks and compute the controls for each block rather than for each $X_i,  i=1,\cdots,N$. Doing so reduces the backward cost from $N \times K \times T_b$ to $B \times K \times T_b$. In other words, if we select $B$ blocks, such that $B \leq N$, we consider the following dynamics: 
 \begin{equation}
     \label{eqn:eqd2}
 S_{n_m+b_{m+1}}=S_{n_m}+ \sum_{i=n_m+1}^{n_m+b_{m+1}} X_{i},\; \;  m=0,1, \cdots, B-1,
\end{equation}
where $n_m=\sum_{j=1}^{m} b_j$, and $b_m, \; \;  m=1,2, \cdots, B$ are chosen such that $n_B=\sum_{j=1}^{B} b_j =N$. 
We adopted the same control $\mu_m(S_{n_{m-1}})$ for each $X_i$ from $i=n_{m-1}+1$ to $i=n_m$. Thus, the $B$ new controls $\mu^X_1,\cdots, \mu^X_B \;$ are defined such that  \begin{equation}
     \label{eqn:eqmu}
{\mu_i}=\mu^X_m\; \; \text{for} \; \; n_{m-1} < i \leq n_m, \;i=1, \cdots, N, \;  m=1, \cdots, B.
 \end{equation}
With this proposed approach, we decreased the cost of the backward step with the price of increasing the variance.\\
To determine $\mu^X_1,\cdots, \mu^X_B$, we used the dynamics proposed in (\ref{eqn:eqd2}) instead of the initial dynamics (\ref{eqn:eqd1}) to define a reformulated dynamic programming equation. We employed the same steps as those followed in the proof of the proposition, but instead of conditioning on $X_{n+1}$, we conditioned on $X_{n_m+1},\cdots, X_{n_m+b_m}$. Applying the same control $\mu_{m+1}$ for each $X_i, i=n_m +1,\cdots,n_m+b_m $, as explained in (\ref{eqn:eqmu}), we obtain 
\begin{equation}
\label{eqn:eq21}
\begin{aligned}
 % &u(m,s_k)= \min _{\mu^X_{m+1} \in (-\infty,1)} \int_{[0,S-s_k]^{b_m}} \; \; \frac{\left(\prod_{j=n_m+1}^{n_m+b_m} f_{X_j}(t_j) \right)^{2}}{\prod_{j=n_m+1}^{n_m+b_m} \Tilde{f}_{X_{j}}\left(t_j  ; \mu^X_{m+1}\right)} \\
 % &\times u\left(m+1,s_{k}+t_{n_m+1}+ \cdots+ t_{n_m+b_m} \right) d t_{n_m+1} \cdots d t_{n_m+b_m}   \\ 
  u(m,s_k)= 
  \min _{\mu \in (-\infty,1)} \int_{[0,+ \infty[^{b_m}} & \frac{ e^{- \mu \sum_{j=n_m+1}^{n_m+b_m} \Lambda_{X_j(t_j)}}}{(1-\mu)^{b_m}}\prod_{j=n_m+1}^{n_m+b_m} f_{X_j}(t_j) \\ 
  &  \times  \; u\left(m+1,s_{k}+t_{n_m+1}+ \cdots+ t_{n_m+b_m} \right) \; d t_{n_m+1} \cdots d t_{n_m+b_m}.
  \end{aligned}
\end{equation}
%\begin{remark}
%Instead of solving (\ref{eqn:eq21}), we propose to minimize an approximate upper bound of it {\color{red} which will depend on the case of applicability }
%More precisely, we can easily show that, for example for the i.i.d. Weibull case with a shape parameter greater than 1, the following holds
%\begin{equation}
%\label{eqn:eqre}
%\sum_{j=n_m+1}^{n_m+b_m} \Lambda_X_j(t_j) \leq \Lambda_X\left( \sum_{j=n_m+1}^{n_m+b_m} t_j\right),  \; %t_j>0.
%\end{equation} 
%where X have the same distribution as $X_j, \; j=n_m+1,\cdots,n_m+b_m$.
%\end{remark}
Instead of solving the above equation, we propose minimizing its approximate upper bound, which becomes clearer in the next two subsections.
\subsection{OP at the Output of Diversity Receivers in a Log-normal Environment without Co-channel Interference} 
The computation of the OP at the output of diversity receivers is equivalent to evaluating the CDF of the sum of the SNRs. Therefore, the interest in the first application is in the estimation of the left-tail region of the following form:
\begin{equation}
\label{eqn:app1}
\mathbb{P}\left(\sum_{i=1}^{N} X_i \leq \gamma_{\textrm{th}} \right).
\end{equation}
 We compared the approach to the HRT technique (see \cite{rached2015fast}) and the exponential twisting estimator (see \cite{asmussen2016exponential}). We also used the improved version to achieve better results. When applying the aggregate method, instead of solving (\ref{eqn:eq21}), we propose to minimize an approximate upper bound of it.
 More precisely, for the i.i.d. log-normal case, 
\begin{equation}
\label{eqn:eqre}
\sum_{j=n_m+1}^{n_m+b_m} \Lambda_{X_j}(t_j) \leq \Lambda_{X}\left( \sum_{j=n_m+1}^{n_m+b_m} t_j\right),  \; t_j>0.
\end{equation} 
%where X have the same distribution as $X_j, \; j=n_m+1,\cdots,n_m+b_m$.
holds asymptotically, i.e. when the sum $\sum_{j=n_m+1}^{n_m+b_m} t_j$ is sufficiently small, where $X$ has the same distribution as $X_j, j=n_m+1,\cdots, n_m+b_m$. This result can be proven using the asymptotic result of the tail of a Normal distribution in \cite{asmussen2011efficient}. Using the inequality (\ref{eqn:eqre}), the twisting parameters $\mu^X_{m+1}$ are then selected as the argmin of the following approximated upper bound
 \begin{equation}
 \begin{aligned}
\label{eqn:equ1}
& u(m,s_k) \lessapprox \min _{\mu \in (-\infty,1)} \int_{[0,S-s_k]} \; \; \frac{ e ^{ - \mu \;  \Lambda_{X}(y)}}{(1-\mu)^{b_m}} \; \; f_{\sum_{j=n_m+1}^{n_m+b_m}X_j}\left (y\right) \; u\left(n_m+b_m,s_{k}+y \right) \; d y,
\end{aligned}
\end{equation} 
 where $f_{\sum_{j=n_m+1}^{n_m+b_m}X_j}\left (\cdot\right)$ is the PDF of  $\sum_{j=n_m+1}^{n_m+b_m}X_j$. Given that the PDF of sums of i.i.d. log-normal RVs is unknown, we suggest approximating it using a univariate log-normal PDF $f_{Y_{m+1}}(\cdot)$, whose parameters are computed using moment matching (see \cite{cobb2012approximating}).
 %Let $f_{Y}(\cdot)$ be the PDF of the RV $Y_{m+1}$ and let  $\tilde{f}_{Y}(\cdot)$ the HRT PDF for $Y_{m+1}$. \\
 Finally, we obtain
\begin{equation}
\begin{aligned}
\label{eqn:equ2}
& u(m,s_k) \approx \min _{\mu\in (-\infty,1)} \int_{[0,S-s_k]} \; \; \frac{ e ^{ - \mu \; \Lambda_{X}(y)}}{(1-\mu)^{b_m}} \; \; f_{Y_{m+1}}\left (y\right)  \; u\left(n_m+b_m,s_{k}+y \right) \; d y, \; \; m=0,\cdots,B-1.
\end{aligned}
\end{equation} 
Moreover,  $\mu^X_{1},\cdots,\mu^X_{B}$ are obtained as follows: 
\begin{equation}
\begin{aligned}
\label{eqn:equ2.}
 \mu^X_{m+1}(s_k) \approx \underset{\mu\in (-\infty,1)}{\arg \min}  \int_{[0,S-s_k]} \; \; \frac{ e ^{ - \mu \;  \Lambda_{X}(y)}}{(1-\mu)^{b_m}} \; \; f_{Y_{m+1}}\left (y\right) \; u\left(n_m+b_m,s_{k}+y \right) \; d y, \; \; m=0,\cdots,B-1.
\end{aligned}
\end{equation} 
%Subsequently, the controls $\mu_1,\cdots, \mu_N$ are determined using (\ref{eqn:eqmu}).\\ 
%Note that the use of two approximations in this method is not problematic since IS does not require that we solve the backward problem with high precision in order to get a variance reduction.
Figure~\ref{fig10} plots the number of samples, required for the various approaches,
to ensure $\textrm{TOL}=5\%$ as a function of $\gamma_{\textrm{th}}$. The range of $\gamma_{\textrm{th}}$ ensures a range of probabilities between $2 \times 10^{-12}$ and $6 \times 10^{-6}$. For the aggregate method, we selected a constant parameter $b$ (i.e., $b_m=2$ for all $m=1,\cdots, B$ with $B=\frac{N}{2}$). \\
The choice of the parameter $K=20$ is motivated by Figure~\ref{figK}, which plots the variance as a function of $K$. A larger $K$ results in a smaller variance. The backward step is costly when $K$ is large. Further, the variance reduction for $K>20$ is minimal compared to the increased cost of solving the backward problem.
\begin{figure}[H] 
\begin{center}
\includegraphics[scale = 0.45]{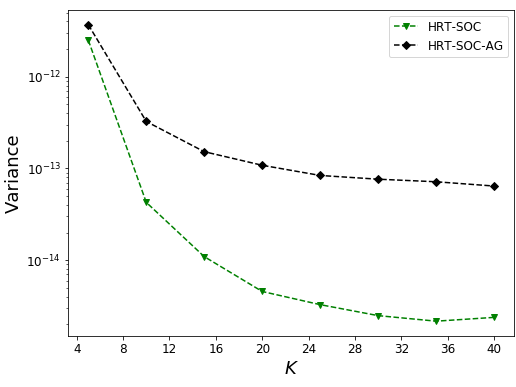} 
\caption {Variance as a function of $K$ with the following parameters: $N=10$, $m=0$~dB,  $\sigma=3$~dB, $\textrm{TOL}=0.05$, and $b=2$.} 
\label{figK}
\end{center}
\end{figure} 
\begin{figure}[H] 
\begin{center}
\includegraphics[scale = 0.45]{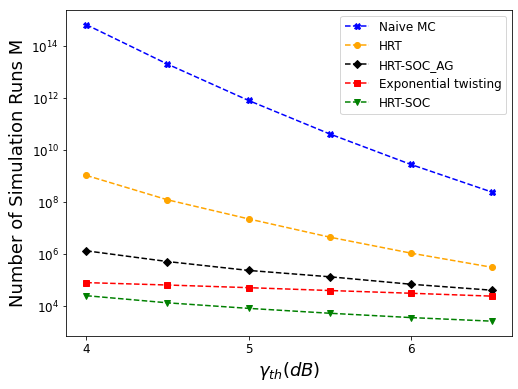} 
\caption {Number of required simulation runs for a $5\%$ relative error with the following parameters: $N=10$, $K=20$, $m=0$~dB, $\sigma=3$~dB, $\textrm{TOL}=0.05$, and $b=2$.} 
\label{fig10}
\end{center}
\end{figure} 
Figure~\ref{fig10} indicates that the number of samples required by naive MC simulations increases faster as the threshold decreases. In addition, the HRT-SOC approach requires the smallest number of simulation runs and saves a considerable number of samples compared to the HRT approach. For example, the number of simulations reduces by about 41,775 times for a small threshold (4~dB), corresponding to an OP value of $2 \times 10^{-12}$. In contrast, the HRT-SOC-AG requires an additional number of samples, compared to the HRT-SOC approach, to reach a $5\%$ relative error, indicating that the variance has increased as expected. However, we still obtained better variance reduction compared to the HRT technique. 

We further studied the computational work for each method. Figure~\ref{fig11} plots the total time required for the exponential twisting, HRT, HRT-SOC, and HRT-SOC-AG techniques to ensure a 5\% relative error as a function of the threshold. We also plotted the time required by the HRT-SOC and HRT-SOC-AG techniques to demonstrate the time required for the backward step compared to that required for the forward step.

The proposed estimator is the best for computational time for small thresholds (corresponding to an OP of less than $3.6 \times 10^{-8}$). As the event becomes rarer, the time gap between the proposed approach and other IS techniques increases significantly. Additionally, Figures~\ref{fig10} and \ref{fig11} reveal that the HRT approach requires numerous samples to estimate the OP of the order of $2 \times 10^{-12}$ with good accuracy. However, for an OP greater than $3.6 \times 10^{-8}$, the proposed approach is more expensive than others due to the additional computational time for the backward step for each threshold, which exceeds the time the remaining techniques when the number of samples is not sufficiently large. Nevertheless, this was enhanced when we used the improved version. The HRT-SOC-AG reduces the CPU time by about 1.7 times compared to the HRT-SOC approach for $\gamma_{\textrm{th}} \geq 5$~dB. Thus, with this choice of $b$, the efficiency of the aggregate method regarding time reduction exceeds the loss in variance. This choice of $b_m, m=1,\cdots, B$ is not optimal. Despite this, it provides better results than the HRT-SOC approach. 
\begin{figure}[H] 
	\begin{center}
		\includegraphics[scale = 0.45]{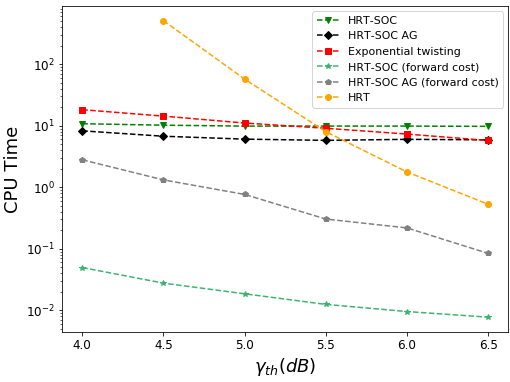} 
		\caption {CPU time required for a 5\% relative error with the following parameters: $N=10$, $K=20$, $m=0$~dB, $\sigma=3$~dB, $\textrm{TOL}=0.05$, and $b=2$.} 
		\label{fig11}
	\end{center}
\end{figure}
Another possible experiment is to study the efficiency as a function of the number $N$ of antennae for a fixed threshold and investigate the number of simulation runs required for each method and the computational time (Figures~\ref{fig12} and \ref{fig13}, respectively). The range of the OP is between $10^{-5}$ and $2.5 \times 10^{-12}$ when using a range between nine and 13 antennae and a fixed threshold $\gamma_{\textrm{th}}=6$~dB. For the aggregate method, we used $b_m=2, \; m=1,\cdots, \frac{N}{2}$ for an even-numbered $N$ and $b_m=2,m=1,\cdots,{\frac{N-3}{2}}, \; b_{\frac{N-1}{2}}=3$ for an odd-numbered $N$.
   \begin{figure}[H] 
\begin{center}
\includegraphics[scale = 0.45]{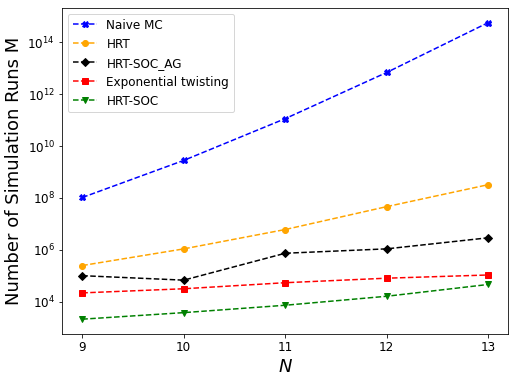} 
\caption {Number of required simulation runs for a 5\% relative error with the following
parameters: $K=20$, $\gamma_{\textrm{th}}=6$~dB, $m=0$~dB, $\sigma= 3$~dB, and $\textrm{TOL}=0.05$.} 
\label{fig12}
\end{center}
\end{figure}
\begin{figure}[H] 
	\begin{center}
		\includegraphics[scale = 0.45]{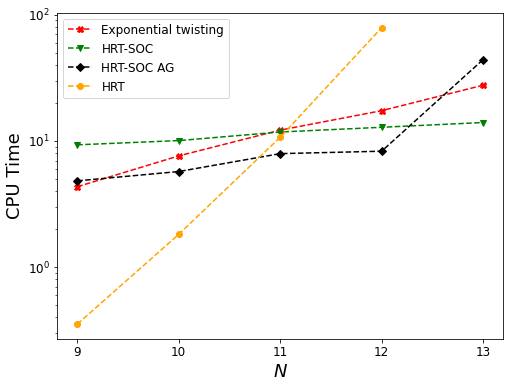}
		\caption {CPU time required for a 5\% relative error with the following parameters: $K=20$, $\gamma_{\textrm{th}}=6$~dB, $m=0$~dB, $\sigma=3$~dB, and $\textrm{TOL}=0.05$.} 
		\label{fig13}
	\end{center}
\end{figure}
 %In fact, For $N=13$, we do not observe an improvement in terms of CPU time compared to the HRT-SOC technique because of the non-optimal choice of $b$. {\color{red} needs to be reformulated}
% It is worth mentioning that the HRT-SOC-AG did not give the same time saving neither for all $\gamma_{\textrm{th}}$ nor for all N, as shown in Figure \ref{fig11} and Figure \ref{fig13}, respectively. 
Figure~\ref{fig12} indicates that the HRT-SOC approach is more efficient and requires fewer simulation runs than the HRT and the exponential twisting approaches. For $N=13$, the proposed method requires 7,455 times fewer simulation runs than the HRT technique to meet the same accuracy requirements. In addition, the variance reduction for the HRT-SOC-AG technique depends on whether $N$ is odd or even. Moreover, the HRT-SOC-AG method requires more simulation runs than the HRT-SOC technique to reach a fixed precision $\textrm{TOL}$, but it is more efficient in terms of CPU time for $N \leq 12$. When the event becomes rarer (for small $\gamma_{\textrm{th}}$ and large $N$), the improved approach with a fixed choice of $b$ becomes less efficient in terms of CPU time than the HRT-SOC approach. In these cases, the number of samples is large enough that the backward time is neglected. Thus, reducing the variance rather than the cost of the backward step is more efficient. These results demonstrate that the choice of $b_m, \; m=1,\cdots, B$ is crucial and should be adaptively chosen to provide better results. More precisely, for fixed parameters $\gamma_{\textrm{th}}$, $\textrm{TOL}$ and $N$, the following optimization problem should be solved: 
\begin{equation}
\label{optimization}
\min_{b,M,K}  \; \; \; \; B \times K \times T_b + M_{\textrm{HRT-SOC-AG}}(b) \times T_f, 
\end{equation}
such that
$$
C^2 \frac{ \operatorname{Var}\left[T_{\textrm{HRT-SOC-AG}}(b)\right]}{M_{\textrm{HRT-SOC-AG}}(b) \alpha^2} \leq \textrm{TOL}^2.$$
The above optimization problem reveals that an optimal choice of $b_m$ in the case of a very rare event is $b_m=1, \; m=1,\cdots, B$, where $B=N$. However, when the event becomes less rare, an optimal choice of $B$ is to take a single block (i.e., $b_1=N$). By doing so, the HRT-SOC-AG technique reduces to the HRT technique because the controls are state-independent in this case. Future work can be devoted to solving the previous optimization problem. Using optimal values of $b_m$, we expect the HRT-SOC-AG estimator to achieve better performance.

\subsection{OP in the Presence of Co-channel Interference in a Log-normal Environment for SISO Systems}
%In the interference environment, a trivial conditional MC can be used, as a first step before employing IS. In fact, we have seen in Section \ref{Section 1} that, for some wireless system configurations, the OP in the presence of co-channel interference and noise can be written in the form of (\ref{eqn:OPsinr}). 
%The question now is to quantify the amount of variance reduction achieved by the conditional MC estimator. \\
% One can easily show that the conditional MC estimator achieves the bounded relative error property when $F_{X_0}$ is polynomial around zero as in the case of 
 %Nakagami-m, Rice, Weibull, Generalized Gamma and $\kappa-\mu$ distributions \cite{rached2021efficient}. Therefore, it is  recommended to use IS technique only if it helps to obtain a better asymptotic bounded or achieve the asymptotically vanishing relative error property.
%However, there exists other cases where the conditional MC estimator does not reduce the variance significantly, and hence it is crucial to combine it with IS. For example, when $ X_1, \cdots, X_N$ have a Log-Normal distribution, we can show numerically that
%the reduction given the by conditional MC estimator is very minor and behave like the naive MC for certain choice of the system's parameters. Therefore, we propose to combine it with IS to obtain a substantial amount of variance reduction. \\
We consider a SISO system and recall that the OP in the presence of co-channel interference and noise is expressed as follows: 
  \begin{equation*}
      P_{out}=
      \mathbb{E}\left[F_{X_{0}}\left(\gamma_{\textrm{th}}\left(\sum_{n=1}^{N} X_{n}+\eta \right)\right)\right],
\end{equation*}
where $X_1,\cdots,X_N$ are the interfering power signal and are assumed to be i.i.d. log-normal RVs with parameters $m$ and $\sigma^2$.
%In section II, we have seen that the OP in this setting can be expressed as in (\ref{eqn:eq1}) with $g(x)=F_{X_0}(\gamma_{\textrm{th}}(x+\eta))$.
%In the following, the RVs $X_i, \; {i=0}, \cdots, {N}$ are assumed to be independent and not necessary identically distributed Log-Normal variates with mean $m_i$ and variance $\sigma_i^2$.\\
%We show in Figure \ref{figmotiv} that applying the IS technique can further improve the amount of variance reduction even after using conditional MC. 
\begin{figure}[H] 
	\begin{center}
		\includegraphics[scale = 0.55]{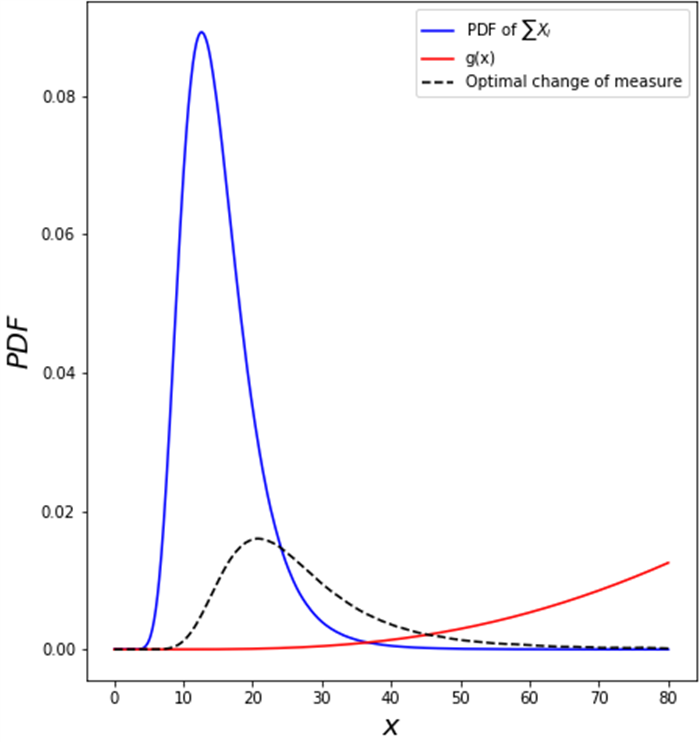}     
		\caption {Motivation for using IS with $N=10$, $m_0=10$~dB, $\sigma_0=4$~dB, $m=0$~dB, $\sigma=4$~dB, $\gamma_{\textrm{th}}=-18$~dB, and $\eta=-10$~dB.}
		\label{figmotiv}
	\end{center}
\end{figure}
The PDF of $\sum_{i=1}^N X_i$ is denoted by $f_{\sum_{i=1}^N X_i}(\cdot)$. To motivate the need for IS to efficiently estimate $P_{out}$, Figure~\ref{figmotiv} plots the quantities $f_{\sum_{i=1}^N X_i}$, $g$, and the optimal IS PDF, which is proportional to $g  f_{\sum_{i=1}^N X_i}$. The product $g  f_{\sum_{i=1}^N X_i}$ in Figure~\ref{figmotiv} is not normalized (i.e., it is an unnormalized PDF).

Sampling from the original PDF of $\sum_{i=1}^N X_i$ is not efficient (i.e., when sampling from the original PDF, most samples fall in the region where $g$ takes almost zero values). Hence, the computation of $P_{out}$ behaves like a rare event problem and can be addressed using the proposed HRT-SOC technique. The comparison is made concerning the estimator of \cite{rached2017efficient}, which is based on a covariance matrix scaling (CS) technique. It transforms the problem of evaluating the OP to computing the probability that a sum of correlated log-normal RVs exceeds a certain threshold. The estimator in \cite{rached2017efficient} is given by 
\begin{equation}
T_{\textrm{CS}}(\mathbf{Z})=\mathbbm{1}_{\left(\sum_{i=0}^{N} \exp \left(Z_{i}\right) \geq 1 / \gamma_{th}\right)} L\left(Z_{0}, \ldots, Z_{N}\right),
\end{equation}
where $\mathbf{Z}=\left(Z_{0}, Z_{1}, \ldots, Z_{N}\right)^{t}$, $\quad Z_{i}= \begin{cases}\log(X_{i})-\log(X_{0}) & i=1,2, \ldots, N \\ \log (\eta)-\log(X_{0})& i=0\end{cases}$, and
\begin{equation}
L\left(Z_{0}, Z_{1}, \ldots, Z_{N}\right)=\frac{\exp \left(-\frac{\theta}{2}(\mathbf{Z}-\boldsymbol{m})^{t} \Sigma^{-\mathbf{1}}(\mathbf{Z}-\boldsymbol{m})\right)}{(1-\theta)^{(N+1) / 2}}.
\end{equation}
The expressions of $\boldsymbol{m}$, $\bold{\Sigma}$, and $\theta$ are given in \cite[eq.~(6)]{rached2017efficient}, \cite[eq.~(7)]{rached2017efficient}, and \cite[eq.~(19)]{rached2017efficient} respectively. \\
We also compared the proposed approach to the exponentially tilted (ET) estimator of \cite{botev2017accurate}.
%In the following simulation results, we will show that the HRT-SOC estimator gives better efficiency results than the CS and ET estimators. 
We also used the HRT-SOC-AG method proposed in the previous subsection to further improve the computational work of the HRT-SOC technique. The reformulated dynamic programming equation is 
\begin{equation}
\label{eqn:eq21intrf}
\begin{aligned}
  u(m,s_k)=  \min _{\mu\in (- \infty, 1)}  \int_{(0,+\infty)^{b_m}}&\frac{ e^{- \mu \sum_{j=n_m+1}^{n_m+b_m} \Lambda_{X_j}(t_j)}}{(1-\mu)^{b_m}} \\& \times \prod_{j=n_m+1}^{n_m+b_m} f_{X_j}(t_j) u\left(m+1,s_{k}+\sum_{j=n_m+1}^{n_m+b_m} t_j \right) \; d t_{n_m+1} \cdots d t_{n_m+b_m}.
  \end{aligned}
\end{equation}
\\
Next, using the following inequality, proven in \cite{juneja2002simulating}, which is particularly satisfied in the case of i.i.d. log-normal RVs and holds for $\sum_{j=n_m+1}^{n_m+b_m} t_j$ that are large enough:
\begin{equation}
\label{eqn:eqreintrf}
\sum_{j=n_m+1}^{n_m+b_m} \Lambda_{X_j}(t_j) \geq \Lambda_{X}\left( \sum_{j=n_m+1}^{n_m+b_m} t_j\right)-\epsilon,  \; t_j>0, \; \text{for all}  \; \epsilon>0,
\end{equation} 
we can write 
\begin{equation}
\begin{aligned}
\label{eqn:equ3}
u(m,s_k) \approx & \min _{\mu \in (- \infty, 1)} \int_{(0,+\infty)} \; \; \frac{ e ^{ - \mu \; \Lambda_{X}(y)}}{(1-\mu)^{b_m}} \; \; f_{Y_{m+1}}\left (y\right) \; u\left(n_m+b_m,s_{k}+y \right) \; d y, \; \; m=0,\cdots, B-1.
\end{aligned}
\end{equation} 
The large value of $\sum_{j=n_m+1}^{n_m+b_m} t_j$ is motivated by Figure~\ref{figmotiv}, which illustrates that the change of measure tends to increase the value of the sum in the regime of rare events.
%We use the same parameters as in the previous experiment to evaluate, as a function of $\gamma_{\textrm{th}}$, the number of samples needed by the CS, the ET, the HRT-SOC, the HRT-SOC-AG and the naive MC methods, to ensure a fixed accuracy level \textrm{TOL}= 5\%.
We studied the efficiency of the four IS schemes regarding the number of samples necessary to ensure a fixed accuracy requirement. To this end, Figure~\ref{figintrf2} plots the number of samples to ensure $\textrm{TOL}=5\%$ as a function of $\gamma_{\textrm{th}}$. This figure reveals that the HRT-SOC approach saves numerous samples compared to other approaches. For instance, the CS technique requires approximately 2,000 times as many simulations as the HRT-SOC scheme needs. The aggregate method did not affect the variance reduction. 
%One explanation for this result is that the controls are close to each other, that using the same control for two consecutive RVs does not affect the accuracy of the algorithm.

We further investigated the gain in terms of the required computational time. Figure~\ref{figintrf3} presents the total CPU time needed by the four techniques to achieve the fixed accuracy TOL. The HRT-SOC approach requires less CPU time than the ET approach for the range of considered thresholds. In particular, when $\gamma_{\textrm{th}}=-30$~dB, it is 13 times more efficient than the ET scheme. Compared to the CS approach, the HRT-SOC technique is more efficient when  $\gamma_{\textrm{th}}<-25$~dB, corresponding to an OP less than $3 \times 10^{-8}$. The required computational time for the HRT-SOC technique is almost the same in the considered threshold range, whereas the CS and ET approaches require much more time as the threshold decreases. Moreover, the HRT-SOC-AG technique requires less time than the HRT-SOC technique using $b=2$ to estimate the quantity of interest $\alpha$. Therefore, the improved approach widens the region over which the proposed approach outperforms the CS approach.
\begin{figure}[H] 
\begin{center}
\includegraphics[scale = 0.45]{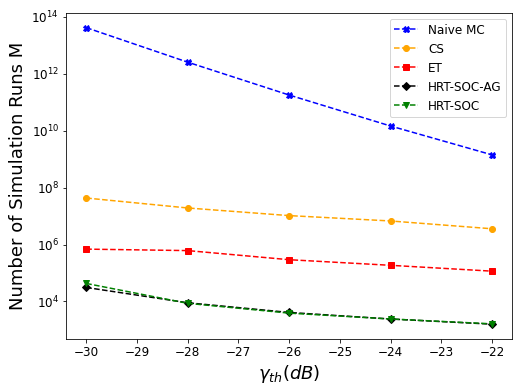}   
\caption {Number of required simulation runs with the following parameters: $N = 10$, $K=20$, $S=40$, $\textrm{TOL}=0.05$, $\eta=-10$~dB, $ m_0=10$~dB, $\sigma_0=4$~dB, $m=0$~dB, and $\sigma=4$~dB.}
\label{figintrf2}
\end{center}
\end{figure}
\begin{figure}[H] 
\begin{center}
\includegraphics[scale = 0.45]{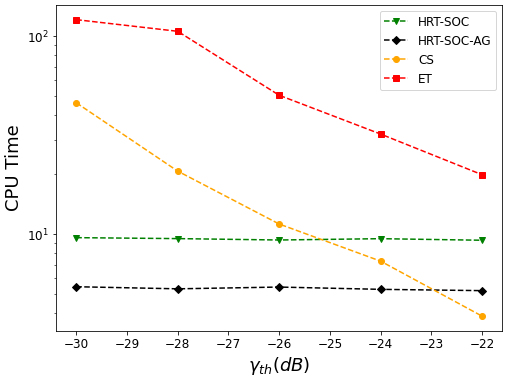}     
\caption {CPU time required for a 5\% relative error with the following parameters: $N = 10$, $K=20$, $S=40$, $\textrm{TOL}=0.05$, $\eta=-10$~dB,  $m_0=10$~dB, $\sigma_0=4$~dB,  $m=0$~dB, and $\sigma=4$~dB.} 
\label{figintrf3}
\end{center}
\end{figure}
\newpage
In the last experiment, we studied the influence of varying the accuracy $\textrm{TOL}$ on the proposed and other IS approaches. To this end, Figures~\ref{figintrf7} and \ref{figintrf8} present the number of simulation runs and CPU time needed when varying $\textrm{TOL}$ for a fixed $\gamma_{\textrm{th}}$ and $N$. This choice makes the OP approximately equal to $10^{-7}$. 
\begin{figure}[H] 
\begin{center}
\includegraphics[scale = 0.45]{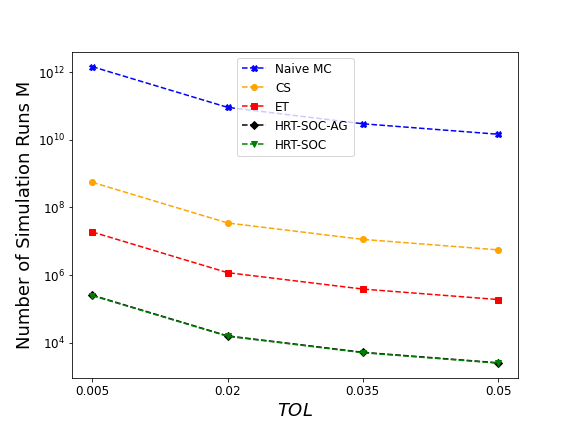}     
\caption {Number of required simulation runs with the following parameters: $N = 10$, $K=20$, $S=40$, $\gamma_{\textrm{th}}=-24$~dB, $\eta=-10$~dB, $m_0=10$~dB, $\sigma_0=4$~dB, $m=0$~dB, and $\sigma=4$~dB.}
\label{figintrf7}
\end{center}
\end{figure}
\begin{figure}[H] 
\begin{center}
\includegraphics[scale = 0.45]{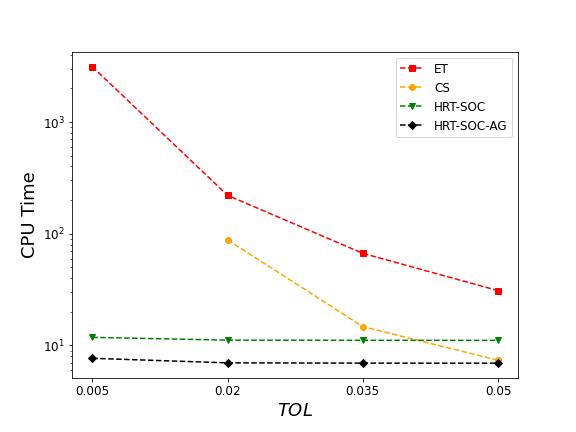}     
\caption {CPU time for a 5\% relative error with the following parameters: $N=10$, $K=20$, $S=40$, $\gamma_{\textrm{th}}=-24$~dB, $\eta=-10$~dB,  $m_0=10$~dB, $\sigma_0=4$~dB, $m=0$~dB, and $\sigma=4$~dB.}
\label{figintrf8}
\end{center}
\end{figure}
Figure~\ref{figintrf7} confirms the high gains of the proposed methods compared to all other IS approaches. Our approaches are $2000$ times (respectively $65$ times) more efficient than the CS (respectively the ET) approaches for all values of $\textrm{TOL}$. Furthermore, Figure~\ref{figintrf8} demonstrates that the required time for the proposed methods compared to the other algorithms remains unchanged for the considered range of $\textrm{TOL}$. Moreover, similarly to the previous conclusions, the computational time required by the proposed algorithm is less than that needed by the ET algorithm for all $\textrm{TOL}$. Additionally, the superior performance of the method compared to the CS approach is critical for small values of TOL. Finally, the HRT-SOC-AG method increases the threshold, below which the proposed method performs better than the CS approach, from 0.045 to 0.058.
\vspace{-1mm}

\section{Conclusions}
\subsection*{Summary}
We developed a generic state-dependent IS algorithm to efficiently estimate rare event quantities that could be written as an expectation of a functional of the sums of independent RVs. These problems have applications in the performance analysis of wireless communications systems operating over fading channels. Within a preselected class of a change of measures, the optimal IS parameters are determined via the connection to an SOC formulation. The numerical experiments verified the ability of the proposed approach to accurately and efficiently estimate the quantity of interest in the rare event regime. The proposed approach yields a substantial variance reduction compared with other well-known estimators. Additionally, the estimator requires less CPU time than the other proposed approaches in rare regions. We also proposed an aggregate method to improve efficiency further in terms of computational time.

\subsection*{Possible Extensions}
For future research, the present work can be extended in many directions. One possible direction is to optimize the aggregate method by solving the optimization problem (\ref{optimization}). 
 \\
A further interesting extension is to consider multivariate RVs when estimating the quantity of interest. In this case, RVs should be mutually independent, but the components of each RV are not necessarily independent. As the backward cost increases exponentially with the dimensions, we could employ cheaper approximation methods to calculate the controls, such as neural
networks.\\
\textbf{Acknowledgments} This publication is based on work supported by the King Abdullah University of Science and Technology (KAUST) Office of Sponsored Research (OSR) under Award No. OSR-2019-CRG8-4033 and the Alexander von Humboldt Foundation.
\bibliographystyle{plain}
\bibliography{References.bib}
\end{document}